\documentclass[letterpaper, 10 pt, conference]{ieeeconf}  

\IEEEoverridecommandlockouts                              
\usepackage{amsmath,amssymb,physics,mathtools} 
\usepackage{xcolor}
\usepackage{pgfplots}
\usepackage{pgfplotstable}
\usepgfplotslibrary{fillbetween}
\usepackage[capitalize]{cleveref}
\let\labelindent\relax
\usepackage{enumitem}

\usepackage{amsthm}
\usepackage{algorithmicx}
\usepackage{algorithm}
\usepackage{algpseudocode}
\usepackage{placeins} 

\usepackage{url}
\usepackage{mdframed}
\usepackage{cite}
\makeatletter
\def\@cite#1#2{[{#1\if@tempswa , #2\fi}]}

\makeatother
\usepackage{subcaption}
\usepackage{wrapfig}
\usepackage[tracking=false,kerning=true,spacing=true]{microtype}

\usepackage{siunitx}
\DeclareSIUnit{\watthour}{Wh}

\usepackage{tikz,pgfplots}
\usetikzlibrary{arrows.meta, positioning, calc}
\pgfplotsset{every axis/.append style={
    tick label style={font=\footnotesize},
    legend style={font=\footnotesize},
    label style={font=\small}
}}

\usepackage{pgfplotstable} 
\usepgfplotslibrary{groupplots}
\pgfplotsset{compat=1.18}
\definecolor{ETHblue}{RGB}{0,112,192}      
\definecolor{ETHpurple}{RGB}{128,0,128}      
\definecolor{ETHpetrol}{RGB}{0,153,153}      
\definecolor{ETHgreen}{RGB}{112,173,71}      
\definecolor{ETHbronze}{RGB}{112,79,18}      
\definecolor{ETHrot}{RGB}{150,39,45}         
\definecolor{ETHpurpur}{RGB}{140,10,89}       
\definecolor{ETHgrau}{RGB}{87,87,87}          

\newtheoremstyle{DefinitionStyle}
{5pt}{5pt}                
{}                        
{}                        
{\bfseries}               
{}                       
{3pt plus 1pt minus 1pt}  
{%
  \thmname{#1}\thmnumber{ #2 }\thmnote{\normalfont(#3)}
  \;
}
\theoremstyle{DefinitionStyle}
\newtheorem{definition}{Definition}

\newtheoremstyle{AssumptionStyle}
  {5pt}{5pt}                
  {}                        
  {}                        
  {\bfseries}               
  {}                        
  {3pt plus 1pt minus 1pt}  
  {%
    \thmname{#1}\thmnumber{ #2 }\thmnote{\normalfont(#3)}
    \;
  }

\theoremstyle{AssumptionStyle}
\newtheorem{assumption}{Assumption}
\crefname{assumption}{Assumption}{Assumptions}

\newtheoremstyle{Theorem}
  {5pt}{5pt}                
  {}                        
  {}                        
  {\bfseries}               
  {}                        
  {3pt plus 1pt minus 1pt}  
  {%
    \thmname{#1}\thmnumber{ #2 }\thmnote{\normalfont(#3)}
    \;
  }

\theoremstyle{TheoremStyle}
\newtheorem{theorem}{Theorem}
\newtheoremstyle{PropositionStyle}
  {5pt}{5pt}                
  {}                        
  {}                        
  {\bfseries}               
  {}                        
  {3pt plus 1pt minus 1pt}  
  {%
    \thmname{#1}\thmnumber{ #2 }\thmnote{\normalfont(#3)}
    \;
  }

\newtheorem{proposition}{Proposition}
\newtheoremstyle{CorollaryStyle}
  {5pt}{5pt}                
  {}                        
  {}                        
  {\bfseries}               
  {}                        
  {3pt plus 1pt minus 1pt}  
  {%
    \thmname{#1}\thmnumber{ #2 }\thmnote{\normalfont(#3)}
    \;
  }

\newtheorem{corollary}{Corollary}
\newtheoremstyle{Examplestyle}
  {5pt}{5pt}                
  {\normalfont}             
  {}                        
  {\bfseries}               
  {}                        
  {3pt plus 1pt minus 1pt}  
  {%
    \thmname{#1}\thmnumber{ #2 }\thmnote{\normalfont(#3)}
    \;
  }

\theoremstyle{Examplestyle}

\DeclareMathOperator*{\argmin}{arg\,min}
\DeclareMathOperator*{\argmax}{arg\,max}

\newcommand{\ThetaSet}{\Theta}
\newcommand{\X}{\mathcal{X}}
\newcommand{\Xj}{\mathcal{X}^j}
\newcommand{\Xij}{\mathcal{X}_i^j}

\newcommand{\tht}{\theta}

\newcommand{\x}{x}

\newcommand{\xhatj}{\hat{x}^j}
\newcommand{\xstar}{x^\star}

\newcommand{\s}{s}
\newcommand{\sj}{s^j}

\newcommand{\Ui}{U_i}
\newcommand{\phii}{\varphi_i}
\newcommand{\F}{F}

\newcommand{\obsj}{(\xhatj,\sj,\Xj)}
\newcommand{\data}{\{(\hat{x}^j,s^j,\mathcal{X}^j)\}_{j=1}^N}

\newcommand{\lpred}{\ell_{\mathrm{pred}}}
\newcommand{\lvi}{\ell_{\mathrm{VI}}}
\newcommand{\lsub}{\ell_{\mathrm{sub}}}

\newcommand{\lsubfull}{\lsub(\tht;\hat{x}^{1:N},s^{1:N})}

\title{\LARGE \bf
Suboptimality Loss for Inverse Learning from Imperfect Equilibria
}

\author{
Andreas Feik$^{1}$, Pierre Pinson$^{2}$, and Dario Paccagnan$^{3}$%
\thanks{This paper was partially supported by the EPSRC grant EP/Y001001/1, funded by the International Science Partnerships Fund (ISPF) and UKRI; and by an Imperial-MIT seed fund.}%
\thanks{$^{1}$A. Feik is with the Automatic Control Laboratory, ETH Zurich, 8092 Zurich, Switzerland
{\tt\small andreas.feik@outlook.com}}%
\thanks{$^{2}$P. Pinson is with the Dyson School of Engineering and Design, Imperial College London, London, UK
{\tt\small p.pinson@imperial.ac.uk}}%
\thanks{$^{3}$D. Paccagnan is with the Department of Computing, Imperial College London, London, UK
{\tt\small d.paccagnan@imperial.ac.uk}}%
}
\begin{document}

\maketitle
\thispagestyle{empty}
\pagestyle{empty}

\begin{abstract}
Many modern systems involve the strategic interaction of multiple agents. In such settings, observed actions typically reflect equilibrium behavior under utilities that are only partially known. Recovering these hidden utilities from data -- the central goal of inverse game theory -- is key for prediction, counterfactual analysis, and mechanism design. However, existing approaches based on inverse variational inequalities are highly sensitive to noisy and inconsistent equilibrium observations, thus limiting their applicability. 
In this paper, we resolve this issue by introducing a game-theoretic suboptimality loss that measures the aggregate utility gain players could obtain by unilaterally deviating from an observed strategy profile. First, we show that this loss is convex and admits an efficient decomposition into player-wise best-responses. Second, we show this loss is sandwiched between the predictability loss and the inverse variational inequality loss, making it a tractable surrogate for equilibrium prediction. Third, we develop a mirror descent algorithm to minimize it and demonstrate on a heterogeneous networked Cournot competition that our approach remains accurate under noisy observations and inconsistent equilibrium data while inverse variational \mbox{inequality methods produce degenerate estimates.}

\end{abstract}


\section{Introduction}\label{sec:introduction}

Many modern socio-technical systems are shaped by the interaction of multiple strategic agents. Examples include traffic routing, electricity markets, online platforms, and resource allocation in networks \cite{paccagnan2018nash,bramoulle2014strategic}. In such settings, observed actions typically reflect (near) equilibrium behavior under utility or cost models that are only partially known. Recovering these hidden primitives from data is the central goal of \emph{inverse game theory} \cite{kuleshov2015inverse,bertsimas2015data,thai2018imputing, cui2025inverse}.  This field has received increased attention over the last years, 
because of the wealth of potential applications (transportation, 
energy, etc.) and potential impact (e.g., better coordination, 
market monitoring), as well as for its usability in prediction tasks, counterfactual analysis, and 
mechanism design. 

The most common approach to \emph{inverse game theory} is based 
on inverse variational inequalities, where unknown utility 
parameters are estimated by enforcing first-order equilibrium conditions 
\cite{bertsimas2015data}. While attractive for its 
generality and computational tractability, this approach is known 
to be highly sensitive to noisy or inconsistent equilibrium data. 
In practice, observations are invariably subject to measurement 
error, limited rationality, or inexact implementation, and even 
small perturbations can lead to degenerate parameter estimates. The consequences are significant, as 
downstream tasks such as forecasting agent behavior, designing 
incentives, or informing decisions all hinge on the 
quality of the recovered model.

In the single-agent setting, the predictability loss of 
\cite{aswani2018inverse} -- which quantifies, for a given set 
of parameters, the difference between the predicted and observed 
solution -- addresses this issue by optimizing directly for 
out-of-sample prediction performance. However, even in the 
single-agent case this leads to a bilevel optimization problem 
that is computationally demanding. The difficulty compounds 
significantly in the multi-agent setting, where this formulation leads to a mathematical program with equilibrium 
constraints, rendering it largely intractable in practice.
\medskip

In this paper, we pursue a different route: rather than 
enforcing first-order conditions or optimizing for prediction directly, we build a loss function grounded in the Nash equilibrium definition itself, that is both \emph{computationally efficient} and \emph{robust to noisy observations}. 
The key idea is simple: if an observed strategy profile is an equilibrium under a candidate model, then no player should be able to improve its utility through a unilateral deviation. 
Motivated by this principle, we introduce a \emph{game-theoretic suboptimality loss} that measures the aggregate gain players could obtain by best responding to the observed actions of the others. 
While related to~\cite{mohajerin2018data, maddux2023data}, our approach addresses multi-agent settings, with possibly shared parameters and continuous action spaces, and is the first to do so tractably and with robustness to noisy equilibrium observations. Specifically, our contributions are threefold.
\medskip



\begin{mdframed}[hidealllines=true,backgroundcolor=blue!10]
\textbf{C1.} We introduce a suboptimality loss 
for inverse game theory, defined as the aggregate gain players could obtain by unilaterally deviating from an 
observed strategy, averaged over all available samples. 
For the commonly encountered utility models that are linear in the unknown parameter, we show that the proposed loss is convex.
\end{mdframed}

\begin{mdframed}[hidealllines=true,backgroundcolor=blue!10]
\textbf{C2.} We show that, under standard assumptions, the proposed suboptimality loss is sandwiched 
between the predictability loss and the inverse variational 
inequality loss, thus serving as a tractable surrogate for the 
former. We further specialize this result to games where the utilities are linear or quadratic in the players' decisions,
yielding a more precise sandwiching \mbox{than in the general case.}%
\end{mdframed}

\begin{mdframed}[hidealllines=true,backgroundcolor=blue!10]
\textbf{C3.} We develop a mirror descent algorithm to minimize the proposed loss, relying solely on player-wise best-response. We numerically demonstrate on a networked Cournot competition that the proposed loss is substantially more robust to noisy observations than the state of the art methods, both theoretically and empirically.
\end{mdframed}

\section{Game-theoretic Suboptimality Loss}\label{sec:sub}

\subsection{Inverse Game Setup}
We consider a game with $P$ players, where an exogenous signal 
$\s$ determines the environment. For a given signal $\s$, each 
player $i$ selects an action $x_i$ from a 
feasible set $\X_i(\s)$, and receives a utility 
$\Ui(\x,\s;\tht)$ that depends on the joint action profile 
$\x\in\mathcal{X}(\s)=\mathcal{X}_1(\s)\times\cdots\times
\mathcal{X}_P(\s)$, on the signal $\s$, and on an unknown 
parameter vector $\tht\in\ThetaSet$.

We are given a dataset 
$\data$ containing, for each observation $j$, 
a signal $\sj$, a feasible set 
$\Xj=\X(s^j)$, and an observed strategy 
profile $\xhatj\in\Xj$. The goal is to reconstruct $\tht\in\ThetaSet$ such 
that all observed profiles in the dataset are well explained as approximate equilibria of the corresponding game. This problem is known as \emph{contextual inverse game learning}.

As common in the literature \cite{besbes2025contextual,mohajerin2018data}, we focus on settings where the utility is linear in the unknown 
parameter, that is,
\[
\Ui(\x,\s;\tht)=\tht^\top \phii(\x,\s),
\qquad i=1,\dots,P,
\]
with known feature maps $\phii$. We stress that linearity in 
$\tht$ does not imply linearity in the decisions; for instance, games with quadratic utilities fall well within this class. We study this setting under the following standard convexity assumptions.
\begin{assumption}[Convex game]\label{ass:convex_game}
For each player $i\in\{1,\ldots,P\}$:
\begin{enumerate}
    \item For every $\s\in\mathcal S$, the feasible set $\X_i(\s)$ is non-empty, compact, and convex.
    \item For every $\s\in\mathcal S$, every $x_{-i}\in \X_{-i}(\s)$, and every $\tht\in\ThetaSet$, the mapping
    \[
    x_i \mapsto \Ui(x_i,x_{-i},\s;\tht)
    \]
    is concave and continuous on $\X_i(\s)$.
    \item The utility $\Ui(\x,\s;\tht)$ is continuous in $(\x,\s,\tht)$.
\end{enumerate}
\end{assumption}
Beyond convex games, the loss remains meaningful whenever best responses exist, but tractability and the guarantees below may be lost.
\subsection{Measuring Suboptimality}

We now turn our attention to measuring equilibrium suboptimality. 
Recall that $\xhatj$ is a Nash equilibrium under $\tht$ if and 
only if no player can improve its utility by deviating 
unilaterally, that is,
\[
\Ui(\xhatj,\sj;\tht)\ge \Ui(x_i,\xhatj_{-i},\sj;\tht),
\quad \forall x_i\in\Xij,\ \forall i.
\]
This motivates measuring how well an observed profile $\xhatj$ 
is explained as an equilibrium under $\tht$ through the 
aggregate utility gain players could obtain by deviating 
unilaterally from~it.

\begin{definition}[Game-theoretic suboptimality loss]\label{def:sub_loss}
For an observation $\obsj$, define
\[
\lsub(\tht;\xhatj,\sj)
=
\sum_{i=1}^P
\max_{x_i\in\Xij}
\Big[
\Ui(x_i,\xhatj_{-i},\sj;\tht)-\Ui(\xhatj,\sj;\tht)
\Big].
\]
Equivalently, under the linear parametrization,
\[
\lsub(\tht;\xhatj,\sj)
=
\sum_{i=1}^P
\max_{x_i\in\Xij}
\tht^\top\!\left[
\phii(x_i,\xhatj_{-i},\sj)-\phii(\xhatj,\sj)
\right].
\]
The empirical loss over the entire dataset is therefore
\[
\lsubfull
=
\frac{1}{N}\sum_{j=1}^N \lsub(\tht;\xhatj,\sj).
\]
\end{definition}


Three important observations are in order. 
First, the loss $\lsub(\tht;\xhatj,\sj)$ vanishes if and only if no player  can profit from a unilateral deviation at $\xhatj$, making it a natural measure of equilibrium consistency.
Second, because the utilities are linear in $\tht$, the loss is positively homogeneous: $\lsub(c\tht;\xhatj,\sj)=c\,\lsub(\tht;\xhatj,\sj)$ 
for all $c\ge 0$, so that $\tht$ is identifiable only up to positive scaling. In practice, this is resolved by imposing a convex normalization constraint, such as $\tht \ge 0$ and $\mathbf{1}^\top \tht = 1$.
\footnote{We emphasize that normalization removes only scale ambiguity. Exact identification from noiseless data additionally requires the parameter to be uniquely determined by the observed context equilibrium pairs, which we do not assume. Thus, perfect prediction can coexist with nonzero parameter recovery error. This is consistent with the broader inverse optimization literature, where prediction performance is the natural measure of success.}
Third, the suboptimality loss is convex as we now show, making it amenable to efficient optimization -- see Section \ref{sec:algorithm}. 

\begin{proposition}[Convexity]\label{prop:sub_convex}
Assume that utilities are linear in $\tht$. Then, for 
every observation $\obsj$, the mapping
\[
\tht\mapsto \lsub(\tht;\xhatj,\sj)
\]
is convex on $\ThetaSet$. Consequently, the empirical loss 
$\lsubfull$ is convex in the parameter $\theta$.
\end{proposition}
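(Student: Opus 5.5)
The argument is the standard one: a pointwise supremum of affine functions is convex. We fix an observation $\obsj$ and a player $i$, and write
\[
g_i(\tht)=\max_{x_i\in\Xij}\tht^\top\!\left[\phii(x_i,\xhatj_{-i},\sj)-\phii(\xhatj,\sj)\right].
\]
For each fixed $x_i\in\Xij$, the map $\tht\mapsto \tht^\top[\phii(x_i,\xhatj_{-i},\sj)-\phii(\xhatj,\sj)]$ is linear in $\tht$. Indeed, the feature difference does not depend on $\tht$, because $\xhatj$, $\sj$ and $x_i$ are all held fixed.

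First we check that the maximum is attained and finite. By \cref{ass:convex_game}, $\Xij$ is non-empty and compact, and $\Ui$ is continuous in $x$, so the objective is continuous in $x_i$. Weierstrass then gives a finite maximum for every $\tht\in\ThetaSet$, so each $g_i$ is a real-valued function on $\ThetaSet$.

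The main step is the supremum argument itself. For $\tht_1,\tht_2\in\ThetaSet$ with $\lambda\tht_1+(1-\lambda)\tht_2\in\ThetaSet$ and $\lambda\in[0,1]$, set $d(x_i)=\phii(x_i,\xhatj_{-i},\sj)-\phii(\xhatj,\sj)$. For every $x_i\in\Xij$ we have
\[
(\lambda\tht_1+(1-\lambda)\tht_2)^\top d(x_i)\le \lambda g_i(\tht_1)+(1-\lambda)g_i(\tht_2).
\]
Taking the maximum over $x_i$ yields $g_i(\lambda\tht_1+(1-\lambda)\tht_2)\le\lambda g_i(\tht_1)+(1-\lambda)g_i(\tht_2)$. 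Note that this uses neither concavity in $x_i$ nor convexity of $\Xij$; only linearity in $\tht$ and finiteness are needed.

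Finally, $\lsub(\tht;\xhatj,\sj)=\sum_{i=1}^P g_i(\tht)$ is a finite sum of convex functions and is therefore convex. The empirical loss $\lsubfull$ is a nonnegative combination (weights $1/N$) of such functions, hence also convex. There is no genuine obstacle here. The only point needing care is that convexity on $\ThetaSet$ is meaningful only if $\ThetaSet$ is convex; otherwise the statement should be read as convexity of the extension of the loss to the whole space, which is convex by the same argument.
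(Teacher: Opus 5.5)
Your proof is correct and takes essentially the same approach as the paper: for each fixed deviation the term is affine (here linear) in $\theta$, so the pointwise maximum over $x_i$ is convex, and summing over players and averaging over observations preserves convexity. The attainment argument via Weierstrass and your remark that convexity on $\Theta$ presupposes a convex $\Theta$ are refinements the paper leaves implicit.
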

\begin{proof}
For each player $i$ and feasible deviation $x_i\in\Xij$, the 
quantity
\[
\tht^\top\!\left[
\phii(x_i,\xhatj_{-i},\sj)-\phii(\xhatj,\sj)
\right]
\]
is affine in $\tht$. Taking a pointwise maximum over 
$x_i\in\Xij$ and summing over $i$ and 
$j$ preserves convexity.
\end{proof}

\subsection{Relation to Existing Losses}

The goal of this section is to show that the proposed 
suboptimality loss is a principled surrogate for 
equilibrium prediction: minimizing $\lsub$ over $\tht$ 
drives the prediction error to zero. We establish this 
by sandwiching $\lsub$ between the predictability loss 
and the inverse variational inequality loss, two 
reference losses we now recall.

The inverse variational inequality loss of  \cite{bertsimas2015data} 
is defined as
\[
\lvi(\tht;\xhatj,\sj)
:=
\max_{\x\in\Xj} \F(\tht;\xhatj,\sj)^\top(\xhatj-\x),
\]
where $\F(\tht;\xhatj,\sj)$ denotes the pseudo-gradient\footnote{The pseudo-gradient is the stacked vector of players' negative partial gradients with respect to their own actions, i.e.,
\(
\F(\tht;\x,\s)=
\bigl(
-\nabla_{x_1}U_1(\x,\s;\tht),\dots,-\nabla_{x_P}U_P(\x,\s;\tht)
\bigr)^\top.
\)} of 
the game evaluated at the observed profile $\xhatj$. This 
loss measures how far $\xhatj$ is from satisfying the 
variational inequality induced by $\tht$, which encodes the 
first-order Nash stationarity conditions. It vanishes if 
and only if $\xhatj$ is an equilibrium under $\tht$.

The predictability loss measures instead the squared distance between the observed profile and the nearest 
equilibrium predicted by the model under $\tht$:
\[
\begin{split}
\lpred(\tht;\xhatj,\sj)
:=
&\min_{\x^\star \in \Xj}
\|\xhatj-\x^\star\|^2\\
\quad
&\text{s.t.}\ \F(\tht;\x^\star,\sj)^\top(\x-\x^\star)\ge 0,\ \forall \x\in\Xj.
\end{split}
\]
This loss directly targets predictive accuracy and enjoys 
favorable statistical properties \cite{aswani2018inverse}. However, it is largely 
intractable in practice as its evaluation requires solving a mathematical program with equilibrium constraints.
\medskip

Having recalled both reference losses, we are now ready to 
state the main result of this section, which establishes that 
the suboptimality loss is sandwiched between them.

\begin{theorem}[Loss hierarchy]\label{thm:loss_hierarchy}
For every player $i$, let the utility $\Ui(x_i,x_{-i},\s;\tht)$ 
be concave with $L$-Lipschitz continuous gradient in the own 
action $x_i$. Further let the pseudo-gradient $\F(\tht;\cdot,\s)$ 
be $\mu$-strongly monotone on $\Xj$. Then, for every 
observation $\obsj$,
\[
\left(\mu-\tfrac{L}{2}\right)\lpred(\tht;\xhatj,\sj)
\;\le\;
\lsub(\tht;\xhatj,\sj)
\;\le\;
\lvi(\tht;\xhatj,\sj).
\]
\end{theorem}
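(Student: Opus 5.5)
The two inequalities are proved separately and are independent: concavity gives the upper bound, and the lower bound comes from combining the Lipschitz gradient with strong monotonicity at a true equilibrium. Throughout, fix $j$ and $\tht$, and write $\hat x=\xhatj$ and $F(\cdot)=\F(\tht;\cdot,\sj)$.

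\emph{Upper bound $\lsub\le\lvi$.} Take any player $i$ and any $x_i\in\Xij$. Concavity of $\Ui$ in $x_i$ gives the supergradient inequality
\[
\Ui(x_i,\hat x_{-i})-\Ui(\hat x)\le \nabla_{x_i}\Ui(\hat x)^\top(x_i-\hat x_i)=F_i(\hat x)^\top(\hat x_i-x_i).
\]
Maximize each term separately over $x_i\in\Xij$ and sum over $i$. Since $\Xj$ is a Cartesian product, the sum of the separate maxima of $F_i(\hat x)^\top(\hat x_i-x_i)$ equals $\max_{\x\in\Xj}F(\hat x)^\top(\hat x-\x)=\lvi$.

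\emph{Lower bound.} Since $F$ is strongly monotone, continuous and the feasible set is compact and convex, the VI has a unique solution $x^\star$, which is then the unique Nash equilibrium. Hence $\lpred=\|\hat x-x^\star\|^2$, because $x^\star$ is the only feasible point for the MPEC.

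I would then evaluate the suboptimality loss at the particular deviation $x_i=x_i^\star$ for each player, which gives
\[
\lsub\ge\sum_i\big[\Ui(x_i^\star,\hat x_{-i})-\Ui(\hat x)\big].
\]
By $L$-smoothness of $\Ui$ in $x_i$ (the descent lemma applied to $-\Ui$), expanding around $\hat x$ along $x_i^\star-\hat x_i$ gives
\[
\Ui(x_i^\star,\hat x_{-i})-\Ui(\hat x)\ge F_i(\hat x)^\top(\hat x_i-x_i^\star)-\tfrac L2\|\hat x_i-x_i^\star\|^2 .
\]
Summing over $i$ yields
\[
\lsub\ge F(\hat x)^\top(\hat x-x^\star)-\tfrac L2\|\hat x-x^\star\|^2 .
\]
Now split $F(\hat x)^\top(\hat x-x^\star)$ as
\[
F(\hat x)^\top(\hat x-x^\star)=\big(F(\hat x)-F(x^\star)\big)^\top(\hat x-x^\star)+F(x^\star)^\top(\hat x-x^\star).
\]
Strong monotonicity bounds the first term below by $\mu\|\hat x-x^\star\|^2$. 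The VI characterization of $x^\star$, applied with the test point $\x=\hat x\in\Xj$, makes the second term nonnegative. Combining gives $\lsub\ge(\mu-\tfrac L2)\lpred$.

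The main obstacle is the lower bound: one has to choose the deviation $x^\star$ and use the smoothness expansion in the correct direction. That is, one needs a lower bound on the utility gain, which requires the quadratic upper bound on $-\Ui$. Everything else is routine. I would also check that evaluating the player-wise maxima at the same equilibrium $x^\star$ is legitimate; it is, because each $x_i^\star\in\Xij$ by the product structure of the feasible set.
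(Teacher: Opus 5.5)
Your proposal is correct and follows the paper's proof essentially step for step. For $\lsub\le\lvi$ you use, as the paper does, the supergradient inequality together with the product structure of $\Xj$. For the lower bound you use the same ingredients: the deviation $x_i=x_i^\star$, the $L$-smoothness quadratic lower bound, and the split of $\F(\tht;\xhatj,\sj)^\top(\xhatj-\xstar)$ into a strong-monotonicity term and a term made nonnegative by the VI with test point $\xhatj$.

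Your remark that strong monotonicity makes $\xstar$ unique, so that $\lpred=\|\xhatj-\xstar\|^2$, makes explicit a step the paper leaves implicit (it is needed when $\mu<L/2$). Existence of $\xstar$ is more safely drawn from Assumption~\ref{ass:convex_game} (a Nash equilibrium exists and, by concavity, solves the VI) than from continuity of $\F$, which is not among the hypotheses.
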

\begin{proof}
See Appendix.
\end{proof}

Theorem~\ref{thm:loss_hierarchy} establishes that the 
suboptimality loss is sandwiched between the predictability 
loss and the inverse VI loss. The upper bound 
$\lsub\le\lvi$ shows that our loss is never larger than 
the VI residual, and thus constitutes a tighter measure 
of equilibrium consistency. 
The lower bound relates $\lsub$ to the prediction error $\lpred$, showing that whenever the strong monotonicity of the game dominates the curvature of individual utilities ($\mu-L/2>0$), driving $\lsub$ to zero implies $\lpred$ is driven to zero as well. If strong monotonicity fails, equilibria may be nonunique and this guarantee is unavailable; if $\mu\leq L/2$, the bound is uninformative. As shown next, this result can be further sharpened for affine and quadratic utilities.

\begin{corollary}[Affine utilities]\label{cor:affine_case}
Suppose that for each player $i$, the utility is affine in 
the own action when opponents are fixed, that is,
\[
\Ui(x_i,\xhatj_{-i},\sj;\tht)
=
a_i(\tht;\xhatj_{-i},\sj)^\top x_i
+
c_i(\tht;\xhatj_{-i},\sj),
\]
for some $a_i(\cdot)$ and $c_i(\cdot)$.
Then, for every $\obsj$,
\[
\mu\,\lpred(\tht;\xhatj,\sj)
\;\le\;
\lsub(\tht;\xhatj,\sj)
=
\lvi(\tht;\xhatj,\sj).
\]
\end{corollary}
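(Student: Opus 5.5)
The plan is to prove the two parts separately: first the equality $\lsub=\lvi$, which uses only the affine structure, and then the lower bound, which uses only strong monotonicity. Formally, the lower bound is the case $L=0$ of Theorem~\ref{thm:loss_hierarchy}, since an affine function of $x_i$ has a $0$-Lipschitz gradient. To keep the argument self-contained, I would still rederive it directly.

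\textbf{Step 1: the equality.} Fix $\tht$ and the observation. Under the affine form,
\[
\Ui(x_i,\xhatj_{-i},\sj;\tht)-\Ui(\xhatj,\sj;\tht)=a_i^\top(x_i-\xhatj_i).
\]
The pseudo-gradient block is $\F_i(\tht;\xhatj,\sj)=-\nabla_{x_i}\Ui=-a_i$, where $a_i$ is evaluated at $\xhatj_{-i}$, which is exactly the point where $\F$ is evaluated. Hence the $i$-th summand of $\lsub$ equals $\max_{x_i\in\Xij}\F_i^\top(\xhatj_i-x_i)$. Since $\Xj=\X_1^j\times\cdots\times\X_P^j$ is a product set and the objective of $\lvi$ is separable across players, we have
\[
\max_{\x\in\Xj}\F^\top(\xhatj-\x)=\sum_i\max_{x_i\in\Xij}\F_i^\top(\xhatj_i-x_i).
\]
This gives $\lsub=\lvi$.

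\textbf{Step 2: the lower bound.} Let $\xstar$ be an equilibrium under $\tht$. It exists and is unique by strong monotonicity together with compactness and convexity of $\Xj$, and it is a feasible point of the $\lpred$ program, so $\lpred\le\|\xhatj-\xstar\|^2$. Strong monotonicity gives
\[
\big(\F(\xhatj)-\F(\xstar)\big)^\top(\xhatj-\xstar)\ge\mu\|\xhatj-\xstar\|^2 .
\]
The VI condition at $\xstar$, tested with $\x=\xhatj$, gives $\F(\xstar)^\top(\xhatj-\xstar)\ge0$. Combining the two yields
\[
\F(\xhatj)^\top(\xhatj-\xstar)\ge\mu\|\xhatj-\xstar\|^2 .
\]
Because $\xstar\in\Xj$, the left-hand side is at most $\lvi(\tht;\xhatj,\sj)$, which equals $\lsub$ by Step~1. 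Chaining these inequalities gives $\mu\lpred\le\mu\|\xhatj-\xstar\|^2\le\lsub$.

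\textbf{Main obstacle.} The step that needs the most care is the bookkeeping in Step~1. The coefficient $a_i$ depends on $\xhatj_{-i}$, so I must confirm that $\F$ is evaluated at $\xhatj$ and uses the same $a_i$ for every player simultaneously; this holds because each player's gradient is taken with the opponents fixed at $\xhatj_{-i}$. I must also use the product structure of $\Xj$ to exchange the max and the sum. Existence of the equilibrium $\xstar$ is standard and would be cited rather than reproved.
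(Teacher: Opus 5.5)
Your proposal is correct and matches the paper. Step~1 is precisely the paper's argument for $\lsub=\lvi$: identify $[\F(\tht;\xhatj,\sj)]_i=-a_i(\tht;\xhatj_{-i},\sj)$, split the maximum over the product set $\Xj$, and let $c_i$ cancel; for the lower bound, the paper leaves it implicit as the $L=0$ case of Theorem~\ref{thm:loss_hierarchy} (equivalently the standard VI bound carried through this equality), which is what your Step~2 writes out.

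A remark on the statement rather than on your reasoning: $[\F(\tht;(x_i,\xhatj_{-i}),\sj)]_i=-a_i(\tht;\xhatj_{-i},\sj)$ does not depend on $x_i$, so testing strong monotonicity on $(x_i,\xhatj_{-i}),(y_i,\xhatj_{-i})\in\Xj$ gives $0\ge\mu\|x_i-y_i\|^2$, hence with $\mu>0$ every $\Xij$ is a singleton and the lower bound has no content in any nontrivial instance.
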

\begin{proof}
See Appendix.
\end{proof}

When utilities are affine in the own action, the suboptimality
loss and the inverse VI loss coincide exactly, with common calibration constant $\mu$, recovering the standard VI bound \cite[Theorem~1]{bertsimas2015data}.

\begin{corollary}[Quadratic games]\label{cor:quadratic_games}
Consider a game with utilities
\[
U_i(\x,\s;\tht)
=
q_i(\s;\tht)^\top x_i
-
\tfrac{1}{2} x_i^\top Q_{ii} x_i
-
\sum_{r\neq i} x_i^\top Q_{ir} x_r,
\]
for some $q_i(\cdot)$, where $Q_{ii}=Q_{ii}^\top\succeq 0$ for all $i$. Let $H$ 
denote the block matrix with blocks $Q_{ir}$, and define
\[
M:=\frac{H+H^\top}{2}-\frac{1}{2}\operatorname{diag}
(Q_{11},\dots,Q_{PP}).
\]
Then, for every observation $\obsj$,
\[
\lambda_{\min}(M)\,\lpred(\tht;\xhatj,\sj)
\;\le\;
\lsub(\tht;\xhatj,\sj)
\;\le\;
\lvi(\tht;\xhatj,\sj),
\]
where $\lambda_{\min}(M)$ is the calibration constant for $\lsub$ and satisfies $\lambda_{\min}(M)\geq\mu-L/2$, thereby tightening the lower bound in Theorem~\ref{thm:loss_hierarchy}.


\end{corollary}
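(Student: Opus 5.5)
The proof is a direct computation in the quadratic setting. I would first write down the pseudo-gradient: $F(\tht;x,s)=Hx-q(s;\tht)$, where $q$ stacks the $q_i$. Here $H$ has diagonal blocks $Q_{ii}$ and off-diagonal blocks $Q_{ir}$, so $-\nabla_{x_i}U_i = Q_{ii}x_i+\sum_{r\ne i}Q_{ir}x_r - q_i$. The upper bound $\lsub\le\lvi$ needs no new work. Each $U_i$ is concave in $x_i$ because $Q_{ii}\succeq 0$, so it follows from Theorem~\ref{thm:loss_hierarchy}. The same thing can be seen by hand: for each $i$, concavity gives $U_i(x_i,\hat x_{-i})-U_i(\hat x)\le \nabla_{x_i}U_i(\hat x)^\top(x_i-\hat x_i)$. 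Summing the player-wise maxima, which decouple over the product set $\X^j$, gives $\lvi$.

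For the lower bound, let $x^\star$ be the equilibrium under $\tht$. It exists and is unique whenever $\lambda_{\min}(M)>0$, since then $\tfrac{H+H^\top}{2}\succeq M\succ0$ and $F$ is strongly monotone; if $\lambda_{\min}(M)\le0$ the bound is trivial because $\lsub\ge0$. Since $\lpred\le\|\hat x-x^\star\|^2$, it suffices to show $\lsub\ge \lambda_{\min}(M)\|\hat x-x^\star\|^2$. To do this I would lower bound each player's best-response gain by the gain from the specific deviation $x_i=x_i^\star$:
\[
\lsub\ge\sum_i\big[U_i(x_i^\star,\hat x_{-i})-U_i(\hat x)\big].
\]
Because $U_i$ is quadratic in $x_i$, this difference has an exact expansion. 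With $d=\hat x-x^\star$,
\[
U_i(x_i^\star,\hat x_{-i})-U_i(\hat x)=\nabla_{x_i}U_i(\hat x)^\top(-d_i)-\tfrac12 d_i^\top Q_{ii}d_i .
\]
Summing over $i$ gives $F(\hat x)^\top d-\tfrac12 d^\top \mathrm{diag}(Q_{ii})d$.

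Next I would split $F(\hat x)^\top d = (F(\hat x)-F(x^\star))^\top d + F(x^\star)^\top d$. The first term equals $d^\top H d = d^\top\tfrac{H+H^\top}{2}d$. The second term is nonnegative by the VI characterization of $x^\star$ applied with $x=\hat x\in\X^j$. Hence
\[
\lsub\ge d^\top M d\ge\lambda_{\min}(M)\|d\|^2\ge\lambda_{\min}(M)\,\lpred .
\]

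Finally, I would compare the constant with $\mu-L/2$, which I expect to be the only subtle point, since it depends on how $\mu$ and $L$ are identified in this setting. The natural reading is that $\mu$ is the strong-monotonicity modulus, so $\mu\le\lambda_{\min}\big(\tfrac{H+H^\top}{2}\big)$, possibly with equality. Likewise $L$ is a common Lipschitz constant of the own-gradients, so $L\ge\max_i\lambda_{\max}(Q_{ii})$, which gives $\mathrm{diag}(Q_{ii})\preceq L I$. Then Weyl's inequality gives
\[
\lambda_{\min}(M)\ge\lambda_{\min}\big(\tfrac{H+H^\top}{2}\big)-\tfrac12\lambda_{\max}(\mathrm{diag}(Q_{ii}))\ge\mu-\tfrac L2 .
\]
This bound is generally strict because the two extremal eigenvalues need not share an eigenvector. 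Care is needed to state that $\mu$ is taken as the best (largest) strong-monotonicity constant, or any valid one, since the inequality holds in either case.
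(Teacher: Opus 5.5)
Your proposal is correct and follows the paper's proof essentially step for step: deviate each player to $x_i^\star$, use the exact quadratic expansion to get $F(\hat x)^\top d-\tfrac12 d^\top \operatorname{diag}(Q_{ii})\,d$, apply the VI inequality at $x^\star$ together with $F(\hat x)-F(x^\star)=Hd$, bound $d^\top M d\ge\lambda_{\min}(M)\|d\|^2$, and compare with $\mu-L/2$ via Weyl's inequality. The differences are minor: the paper takes $x^\star$ to be the minimizer in $\lpred$, so $\lambda_{\min}(M)\|d\|^2=\lambda_{\min}(M)\,\lpred$ for either sign of $\lambda_{\min}(M)$ and no case split is needed, and it sets $\mu=\lambda_{\min}\bigl(\tfrac{H+H^\top}{2}\bigr)$ and $L=\max_i\lambda_{\max}(Q_{ii})$ exactly, whereas you allow any valid constants.
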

\begin{proof}
See Appendix.
\end{proof}

This result shows that, for quadratic games, the gap between the suboptimality loss 
and the predictability loss is governed by the smallest 
eigenvalue of $M$, a matrix encoding the coupling structure 
of the game. Interestingly, this gives a provably sharper bound than the scalar 
$\mu-L/2$ of Theorem~\ref{thm:loss_hierarchy}. 
\section{Mirror Descent via Best-Response Oracles}\label{sec:algorithm}
We now turn to the problem of minimizing the empirical 
suboptimality loss over $\tht$. A key structural advantage 
of $\lsub$ is that both its evaluation and its optimization 
decompose into independent player-wise best-response problems. 
Specifically, for each observation $j\in\{1,\dots,N\}$ and 
player $i\in\{1,\dots,P\}$, it suffices to compute the unilateral
best-response
\[
x_i^{j,\star}(\tht)\in\argmax_{x_i\in\Xij} 
\Ui(x_i,\xhatj_{-i},\sj;\tht).
\]
Under Assumption~\ref{ass:convex_game}, each such problem 
is a concave maximization problem in $x_i$ and therefore can be solved efficiently. A key advantage of the proposed loss is that its evaluation and optimization require only unilateral best-response computations, rather than repeated solution of a full equilibrium problem. This makes the method attractive in settings where best responses are easy to compute but equilibrium prediction is expensive. 
The  following proposition shows that these same best-response 
problems also yield a valid subgradient of the empirical 
loss, enabling gradient-based optimization.

\begin{proposition}[Subgradient via best-response 
oracles]\label{prop:subgradient}
Let $x_i^{j,\star}(\tht)$ be a maximizer of the player-wise 
best-response problem above. Then a valid subgradient of 
$\lsubfull$ at $\tht$ is
\[
g(\tht)=\frac{1}{N}\sum_{i,j}
\left[
\nabla_{\tht}\Ui\!\big(x_i^{j,\star}(\tht),\xhatj_{-i},\sj;\tht\big)
-
\nabla_{\tht}\Ui\!\big(\xhatj,\sj;\tht\big)
\right].
\]
In the linear-in-parameter case $\Ui(\x,\s;\tht)=\tht^\top\phii(\x,\s)$, 
this reduces to
\[
g(\tht)=\frac{1}{N}\sum_{i, j}
\left[
\phii(x_i^{j,\star}(\tht),\xhatj_{-i},\sj)
-
\phii(\xhatj,\sj)
\right].
\]
\end{proposition}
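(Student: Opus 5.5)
The plan is to prove the statement directly from the subgradient inequality, using the max-of-affine structure already exploited in Proposition~\ref{prop:sub_convex} (essentially Danskin's theorem in its elementary form). It suffices to treat one summand, since subgradients add under sums and scale under positive multiplication. Fix an observation $j$ and a player $i$, and define
\[
h_{ij}(\tht):=\max_{x_i\in\Xij}\tht^\top\!\left[\phii(x_i,\xhatj_{-i},\sj)-\phii(\xhatj,\sj)\right].
\]
The maximum is attained because $\Xij$ is compact and the utility is continuous (Assumption~\ref{ass:convex_game}). Attainment also matches the hypothesis that $x_i^{j,\star}(\tht)$ exists. Note that maximizing $\Ui(x_i,\xhatj_{-i},\sj;\tht)$ over $x_i$ is the same as maximizing the bracketed difference, because the subtracted term $\Ui(\xhatj,\sj;\tht)$ does not depend on $x_i$. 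Hence the best-response maximizer is also a maximizer of the expression defining $h_{ij}$.

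Let $v_{ij}:=\phii(x_i^{j,\star}(\tht),\xhatj_{-i},\sj)-\phii(\xhatj,\sj)$. For any $\tht'\in\ThetaSet$, evaluating the maximum defining $h_{ij}(\tht')$ at the feasible point $x_i^{j,\star}(\tht)$ gives
\[
h_{ij}(\tht')\ \ge\ \tht'^\top v_{ij}\ =\ \tht^\top v_{ij}+(\tht'-\tht)^\top v_{ij}\ =\ h_{ij}(\tht)+v_{ij}^\top(\tht'-\tht).
\]
This is exactly the subgradient inequality, so $v_{ij}\in\partial h_{ij}(\tht)$. Summing over $i$ and $j$ and dividing by $N$ yields $\lsubfull(\tht')\ge\lsubfull(\tht)+g(\tht)^\top(\tht'-\tht)$ with $g$ as stated. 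This proves the linear-in-parameter formula, since in that case $\nabla_\tht\Ui=\phii$.

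The general gradient formula is the step needing care, because for nonlinear dependence on $\tht$ the pointwise maximum of functions $\tht\mapsto\Ui(x_i,\cdot;\tht)-\Ui(\xhatj,\cdot;\tht)$ need not be convex. There I would invoke Danskin's theorem, assuming $\Ui$ is continuously differentiable in $\tht$ and the feasible set is compact. It gives that the directional derivative of $h_{ij}$ is a maximum over the maximizer set, so the gradient at any maximizer lies in the Clarke subdifferential; for convex $h_{ij}$ this is a true subgradient. Since the paper's setting is linear in $\tht$, I would state the general form as this Danskin consequence and treat the linear case as the rigorous core. The only obstacle is making sure the meaning of ``valid subgradient'' outside the convex case is stated clearly.
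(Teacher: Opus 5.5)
Your proposal is correct and is essentially the paper's argument. The paper invokes Danskin's theorem (Bertsekas, Prop.~B.22(b)) for each player-wise maximum and then sums. You instead verify directly the half of Danskin that is needed: evaluating the maximum defining $h_{ij}(\tht')$ at the fixed maximizer $x_i^{j,\star}(\tht)$ gives the subgradient inequality, which is fully rigorous in the linear-in-$\tht$ case and needs only attainment of the best response.

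Your caveat on the general formula is well placed. When the inner functions are not convex in $\tht$, the expression is a subgradient only in the Clarke sense, a point the paper's one-line proof leaves implicit.
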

\begin{proof}
The claim follows from Danskin's theorem 
\cite[Prop.~B.22(b)]{bertsekas2016nonlinear} applied to each 
player-wise maximization term in 
Definition~\ref{def:sub_loss}, followed by linearity of 
summation.
\end{proof}
The subgradient has a natural interpretation: it measures 
the average feature mismatch between the observed actions 
and the unilateral best responses induced by the current 
parameter estimate $\tht$.

\medskip

Building on this, an effective approach is to minimize the empirical loss using 
mirror descent \cite[Sec. 4.2]{bubeck2015convex}. Starting 
from an initialization $\tht_1\in\ThetaSet$, the update 
at iteration $t$ is
\[
\tht_{t+1}
=
\argmin_{\tht\in\ThetaSet}
\left\{
\eta_t \langle g_t,\tht\rangle + B_{\omega}(\tht,\tht_t)
\right\},
\]
where $g_t=g(\tht_t)$, $\{\eta_t\}_{t\ge 1}$ is a stepsize
sequence, and $B_\omega$ is the Bregman divergence induced
by a given mirror map $\omega$. Since $\lsubfull$ is convex by Proposition~\ref{prop:sub_convex},
standard mirror descent guarantees apply: under the usual
boundedness assumptions and for suitable stepsizes, the averaged
iterate $\bar{\tht}_T:=T^{-1}\sum_{t=1}^T\tht_t$ achieves an
$\mathcal{O}(T^{-1/2})$ convergence rate in objective value
\cite[Thm.~4.2]{bubeck2015convex}.

Compared with predictability-based training, each iteration avoids nested equilibrium solves and instead relies only on player-wise best responses. This substantially improves scalability in many structured games where best responses are inexpensive but computing equilibria is costly. Finally, the proposed formulation naturally extends to online settings where observations arrive sequentially and the parameter can be updated incrementally using the same best-response oracle. 

\begin{algorithm}[t]
\caption{Mirror descent for the suboptimality loss}
\begin{algorithmic}[1]
\Require stepsizes $\{\eta_t\}_{t\ge1}$, mirror map 
$\omega$, init. $\tht_1\in\ThetaSet$
\For{$t=1,2,\dots$}
    \For{each observation $j$ and player $i$}
        \State compute the best response
        \[
        x_i^{j,\star}(\tht_t)\in
        \argmax_{x_i\in\Xij}
        \Ui(x_i,\xhatj_{-i},\sj;\tht_t)
        \]
    \EndFor
    \State form the subgradient
    \[
    g_t=
    \frac{1}{N}\sum_{j=1}^{N}\sum_{i=1}^{P}
    \Big[
    \phii(x_i^{j,\star}(\tht_t),\xhatj_{-i},\sj)
    -
    \phii(\xhatj,\sj)
    \Big]
    \]
    \State update
    \[
    \tht_{t+1}
    =
    \argmin_{\tht\in\ThetaSet}
    \left\{
    \eta_t\langle g_t,\tht\rangle
    +
    B_\omega(\tht,\tht_t)
    \right\}
    \]
    \EndFor
\end{algorithmic}
\end{algorithm}

\section{Case Study: Networked Cournot Competition}\label{sec:cournot}

\begin{figure}[t]
    \centering
    \includegraphics[width=0.95\columnwidth]{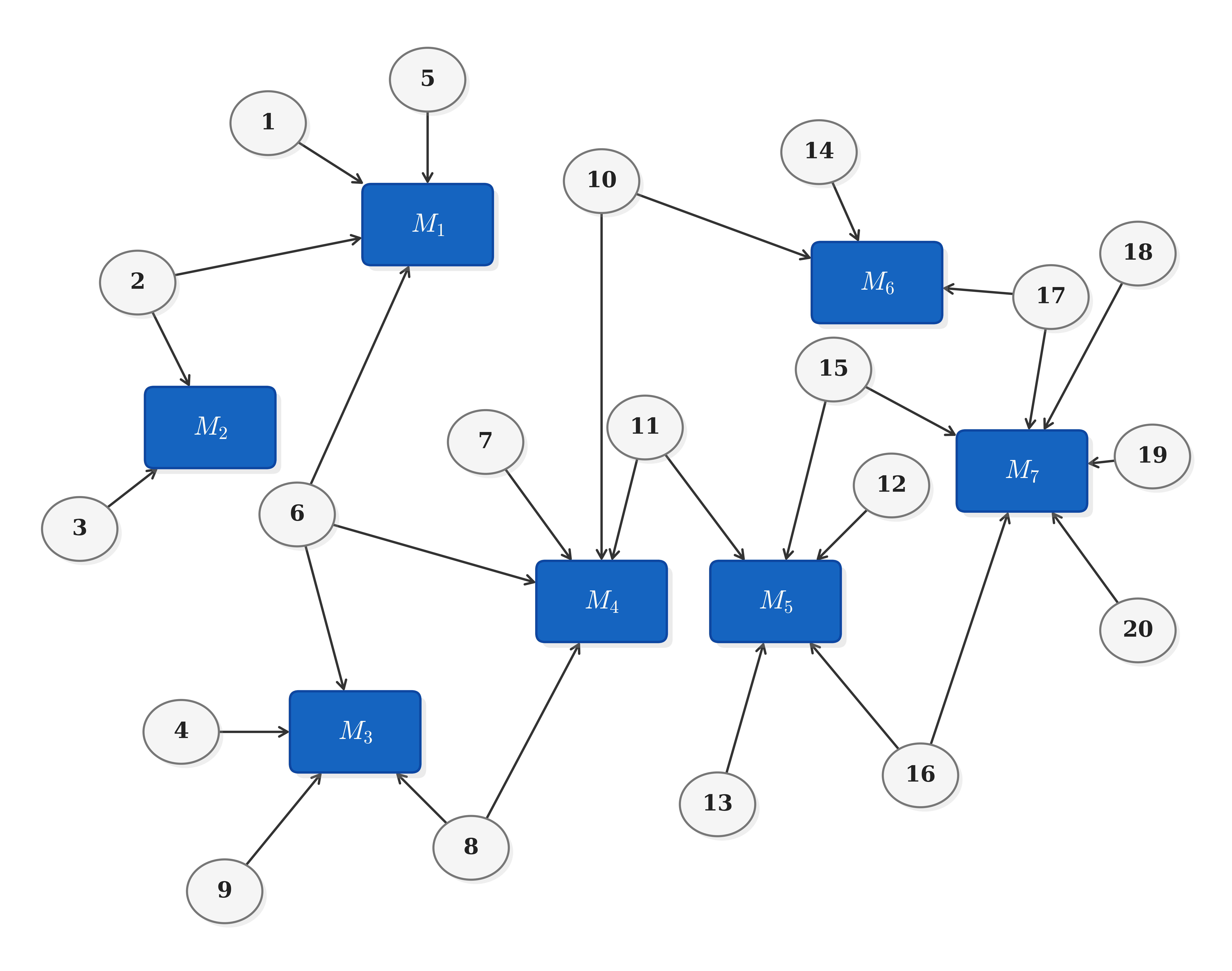}
    \caption{Firm-market participation graph for the networked
    Cournot competition \cite{salehisadaghiani2019distributed}. An edge from firm~$i$ to market~$k$
    indicates that firm~$i$ can supply market~$k$. 
    }
    
    \label{fig:cournot_network}
\end{figure}


\begin{figure*}[h!]
\centering
\begin{tikzpicture}

\definecolor{darkcyan1598169}{RGB}{15,98,169}
\definecolor{darkgrey176}{RGB}{176,176,176}
\definecolor{darkolivegreen9811519}{RGB}{98,115,19}

\begin{groupplot}[
group style={
    group size=3 by 1,
    horizontal sep=1.7cm
},
width=0.335\textwidth,
height=0.34\textwidth,
tick align=outside
]

\nextgroupplot[
tick pos=left,
title={(a) Equilibrium prediction error},
x grid style={darkgrey176},
xlabel={Noise level $\eta$},
xmajorgrids,
xmin=-0.24, xmax=8.24,
xtick style={color=black},
xtick={0,1,2,3,4,5,6,7,8},
xticklabel style={rotate=45.0,anchor=east},
xticklabels={0,0.01,0.05,0.1,0.3,0.5,0.7,0.9,1},
y grid style={darkgrey176},
ylabel={$\|\hat{x} - x^\star\|_2$},
ymajorgrids,
ymin=-0.111802014005773, ymax=2.34784242835736,
ytick style={color=black},
legend cell align={left},
legend style={
  draw=black!20,
  fill=white,
  fill opacity=0.9,
  text opacity=1,
  rounded corners=1pt,
  font=\tiny,
  at={(0.03,0.97)},
  anchor=north west
}
]

\path [draw=darkolivegreen9811519, fill=darkolivegreen9811519, opacity=0.18]
(axis cs:0,0.0711638733480827)
--(axis cs:0,0.036025445124067)
--(axis cs:1,0.0389421742164561)
--(axis cs:2,0.0522595707646187)
--(axis cs:3,0.0739369277160244)
--(axis cs:4,0.156840654290541)
--(axis cs:5,0.233548489816377)
--(axis cs:6,0.313850744581505)
--(axis cs:7,0.389403599980115)
--(axis cs:8,0.426668155964338)
--(axis cs:8,0.64404317797927)
--(axis cs:8,0.64404317797927)
--(axis cs:7,0.610218441985264)
--(axis cs:6,0.505943605084795)
--(axis cs:5,0.393853370258519)
--(axis cs:4,0.273679137942898)
--(axis cs:3,0.142476871057081)
--(axis cs:2,0.105440747881852)
--(axis cs:1,0.071050342115839)
--(axis cs:0,0.0711638733480827)
--cycle;

\path [draw=darkcyan1598169, fill=darkcyan1598169, opacity=0.18]
(axis cs:0,0.0010601072253219)
--(axis cs:0,6.1016419076764e-09)
--(axis cs:1,0.0152500859829708)
--(axis cs:2,1.6710539992415)
--(axis cs:3,1.67105485246215)
--(axis cs:4,1.67104279019582)
--(axis cs:5,1.67105336046301)
--(axis cs:6,1.67105626454375)
--(axis cs:7,1.67105689604334)
--(axis cs:8,1.6710570279812)
--(axis cs:8,2.23603641164711)
--(axis cs:8,2.23603641164711)
--(axis cs:7,2.23603563679254)
--(axis cs:6,2.23602952109034)
--(axis cs:5,2.23601106851659)
--(axis cs:4,2.23600708695278)
--(axis cs:3,2.23604040824994)
--(axis cs:2,2.23598256805252)
--(axis cs:1,1.60037137024167)
--(axis cs:0,0.0010601072253219)
--cycle;

\addplot [thick, darkolivegreen9811519, mark=*, mark size=1.7, mark options={solid}]
table {%
0 0.052129919744548
1 0.0530082956925592
2 0.0739700544406836
3 0.107078250770341
4 0.220616741070804
5 0.326478303362726
6 0.427993122358961
7 0.509668159286348
8 0.548301823383808
};
\addlegendentry{SUB (ours)}

\addplot [thick, darkcyan1598169, mark=square*, mark size=1.7, mark options={solid}]
table {%
0 1.39414083224614e-05
1 0.0198381234197616
2 1.92256469755143
3 1.92541836103941
4 1.92775932893128
5 1.92776316844459
6 1.92776660690305
7 1.92776661472703
8 1.92776663197712
};
\addlegendentry{VI}

\nextgroupplot[
tick pos=left,
title={(b) Parameter recovery error},
x grid style={darkgrey176},
xlabel={Noise level $\eta$},
xmajorgrids,
xmin=-0.24, xmax=8.24,
xtick style={color=black},
xtick={0,1,2,3,4,5,6,7,8},
xticklabel style={rotate=45.0,anchor=east},
xticklabels={0,0.01,0.05,0.1,0.3,0.5,0.7,0.9,1},
y grid style={darkgrey176},
ylabel={$\|\bar{\hat{\theta}}-\bar{\theta}^{\star}\|_2$},
ymajorgrids,
ymin=0.219391952306932, ymax=1.07869698069472,
ytick style={color=black},
legend cell align={left},
legend style={
  draw=black!20,
  fill=white,
  fill opacity=0.9,
  text opacity=1,
  rounded corners=1pt,
  font=\tiny,
  at={(0.57,0.88)},
  anchor=north west
}
]

\path [draw=darkolivegreen9811519, fill=darkolivegreen9811519, opacity=0.18]
(axis cs:0,0.367104294027932)
--(axis cs:0,0.29584493491368)
--(axis cs:1,0.296510109318819)
--(axis cs:2,0.300008820183215)
--(axis cs:3,0.303928138993597)
--(axis cs:4,0.328859819326786)
--(axis cs:5,0.36284742083525)
--(axis cs:6,0.415655791241899)
--(axis cs:7,0.468274995001838)
--(axis cs:8,0.491728867693464)
--(axis cs:8,0.683816930522776)
--(axis cs:8,0.683816930522776)
--(axis cs:7,0.650320126609635)
--(axis cs:6,0.567295534190888)
--(axis cs:5,0.485319960698078)
--(axis cs:4,0.409158971883246)
--(axis cs:3,0.37227671606615)
--(axis cs:2,0.369182688677946)
--(axis cs:1,0.36719328109761)
--(axis cs:0,0.367104294027932)
--cycle;

\path [draw=darkcyan1598169, fill=darkcyan1598169, opacity=0.18]
(axis cs:0,0.362308998075255)
--(axis cs:0,0.258451271779104)
--(axis cs:1,0.79934626859753)
--(axis cs:2,0.964736381224639)
--(axis cs:3,0.969790405609064)
--(axis cs:4,0.969790401097322)
--(axis cs:5,0.969790403391273)
--(axis cs:6,0.969790405907881)
--(axis cs:7,0.969790406557419)
--(axis cs:8,0.969790406553113)
--(axis cs:8,1.03963766122254)
--(axis cs:8,1.03963766122254)
--(axis cs:7,1.03963766121979)
--(axis cs:6,1.03963766121762)
--(axis cs:5,1.039637661115)
--(axis cs:4,1.03963765904456)
--(axis cs:3,1.03963765885959)
--(axis cs:2,1.03963765430584)
--(axis cs:1,0.983281778962774)
--(axis cs:0,0.362308998075255)
--cycle;

\addplot [thick, darkolivegreen9811519, mark=*, mark size=1.7, mark options={solid}]
table {%
0 0.343333394319477
1 0.34365705200663
2 0.345461194918496
3 0.347859651756811
4 0.371585396816077
5 0.421523477271915
6 0.488939848400838
7 0.552881312170857
8 0.578058604029674
};
\addlegendentry{SUB (ours)}

\addplot [thick, darkcyan1598169, mark=square*, mark size=1.7, mark options={solid}]
table {%
0 0.304085068148413
1 0.894548892296018
2 1.01139431433457
3 1.01194438754802
4 1.01252763865772
5 1.01252764185649
6 1.01252764193102
7 1.01252764189558
8 1.0125276418944
};
\addlegendentry{VI}

\nextgroupplot[
tick pos=left,
title={(c) Estimated calibration constants},
x grid style={darkgrey176},
xlabel={Noise level $\eta$},
xmajorgrids,
xmin=-0.24, xmax=8.24,
xtick style={color=black},
xtick={0,1,2,3,4,5,6,7,8},
xticklabel style={rotate=45.0,anchor=east},
xticklabels={0,0.01,0.05,0.1,0.3,0.5,0.7,0.9,1},
y grid style={darkgrey176},
ylabel={Calibration constant},
ymajorgrids,
ymin=-0.0385689494184654, ymax=0.749364491438027,
ytick style={color=black},
legend cell align={left},
legend style={
  fill opacity=0.92,
  draw opacity=1,
  text opacity=1,
  draw=black!20,
  fill=white,
  rounded corners=1pt,
  font=\tiny,
  at={(.96,0.97)},
  anchor=north east
}
]

\path [draw=darkolivegreen9811519, fill=darkolivegreen9811519, opacity=0.18]
(axis cs:0,0.22938735060085)
--(axis cs:0,0.194309610285341)
--(axis cs:1,0.195364621653533)
--(axis cs:2,0.196110890018302)
--(axis cs:3,0.195881504137672)
--(axis cs:4,0.178106406777585)
--(axis cs:5,0.146726191876798)
--(axis cs:6,0.0993603173085384)
--(axis cs:7,0.0526958805767516)
--(axis cs:8,0.0308049993408165)
--(axis cs:8,0.164179498002222)
--(axis cs:8,0.164179498002222)
--(axis cs:7,0.171796260829534)
--(axis cs:6,0.205294383204872)
--(axis cs:5,0.216030595621997)
--(axis cs:4,0.219383346036584)
--(axis cs:3,0.227580089982268)
--(axis cs:2,0.229506035293347)
--(axis cs:1,0.229471703931131)
--(axis cs:0,0.22938735060085)
--cycle;

\path [draw=darkcyan1598169, fill=darkcyan1598169, opacity=0.18]
(axis cs:0,0.713549335035459)
--(axis cs:0,0.499622687653884)
--(axis cs:1,-0.0012682891635349)
--(axis cs:2,-0.0027537930158976)
--(axis cs:3,-2.36004012016519e-08)
--(axis cs:4,-7.41852626647664e-08)
--(axis cs:5,-1.03868507676513e-07)
--(axis cs:6,-4.2591445488801e-08)
--(axis cs:7,-1.81097162747511e-08)
--(axis cs:8,-1.65585155974828e-08)
--(axis cs:8,0)
--(axis cs:8,0)
--(axis cs:7,0)
--(axis cs:6,0)
--(axis cs:5,-1.57420422721633e-12)
--(axis cs:4,2.52557198068987e-11)
--(axis cs:3,2.07801791512051e-08)
--(axis cs:2,2.8450122492834e-08)
--(axis cs:1,0.152639507554416)
--(axis cs:0,0.713549335035459)
--cycle;

\addplot [thick, darkolivegreen9811519, mark=*, mark size=1.7, mark options={solid}]
table {%
0 0.209593533223566
1 0.209682589270768
2 0.209740881208058
3 0.209327369268631
4 0.201568680808995
5 0.18722896131634
6 0.165409807474173
7 0.137599715617212
8 0.12301608202908
};
\addlegendentry{$\lambda_{\min}(\hat M)$ SUB (ours)}

\addplot [thick, darkcyan1598169, mark=square*, mark size=1.7, mark options={solid}]
table {%
0 0.615591620923055
1 0.0208285528892187
2 2.14040876916697e-09
3 1.05888305641695e-09
4 -3.55359211037046e-09
5 -7.02351799073958e-10
6 -9.95043690518664e-11
7 -1.2077362816549e-11
8 -1.22033982204312e-11
};
\addlegendentry{$\hat\mu$ VI}

\end{groupplot}

\end{tikzpicture}
\caption{Median test prediction error, parameter recovery error (where $\bar \theta := \theta / \| \theta \|$), and estimated calibration constants across the sampled games for varying noise levels in the networked Cournot competition.}
\label{fig:network-cournot-results}
\end{figure*}

\begin{figure*}[!h]
\centering
\begin{tikzpicture}

\definecolor{darkcyan1598169}{RGB}{15,98,169}
\definecolor{darkgrey176}{RGB}{176,176,176}
\definecolor{darkolivegreen9811519}{RGB}{98,115,19}

\begin{groupplot}[
group style={
    group size=2 by 1,
    horizontal sep=1.7cm
},
width=0.34\textwidth,
height=0.34\textwidth,
tick align=outside
]

\nextgroupplot[
tick pos=left,
title={(a) No noise: $\eta = 0$},
x grid style={darkgrey176},
xlabel={Number of training samples},
xmajorgrids,
xmin=1.16, xmax=30.84,
xtick style={color=black},
y grid style={darkgrey176},
ylabel={$\|\hat{x} - x^\star\|_2$},
ymajorgrids,
ymin=-0.0631245339475627, ymax=1.32561523831633,
ytick style={color=black},
legend cell align={left},
legend style={
  draw=black!20,
  fill=white,
  fill opacity=0.9,
  text opacity=1,
  rounded corners=1pt,
  font=\tiny,
  at={(0.03,0.97)},
  anchor=north west
}
]

\path [draw=darkolivegreen9811519, fill=darkolivegreen9811519, opacity=0.18]
(axis cs:2,1.26249070321342)
--(axis cs:2,0.0119167458968455)
--(axis cs:5,0.0075395955256614)
--(axis cs:10,0.0056694305433315)
--(axis cs:20,0.0052312484708535)
--(axis cs:30,0.0033881761407952)
--(axis cs:30,0.130724266014147)
--(axis cs:30,0.130724266014147)
--(axis cs:20,0.163282037831006)
--(axis cs:10,0.535293865605557)
--(axis cs:5,0.556322528606608)
--(axis cs:2,1.26249070321342)
--cycle;

\path [draw=darkcyan1598169, fill=darkcyan1598169, opacity=0.18]
(axis cs:2,1.07817923646041)
--(axis cs:2,3.84461516835866e-08)
--(axis cs:5,2.3379619158453e-09)
--(axis cs:10,1.35195222154247e-09)
--(axis cs:20,1.43020915913016e-09)
--(axis cs:30,1.15534133186463e-09)
--(axis cs:30,0.024747422062564)
--(axis cs:30,0.024747422062564)
--(axis cs:20,0.0597115326565692)
--(axis cs:10,0.534638725916998)
--(axis cs:5,0.537349199922373)
--(axis cs:2,1.07817923646041)
--cycle;

\addplot [thick, darkolivegreen9811519, mark=*, mark size=1.7, mark options={solid}]
table {%
2 0.299328984053857
5 0.0529433411625737
10 0.0304980054126748
20 0.0248721859291167
30 0.0206454185014214
};
\addlegendentry{SUB (ours)}

\addplot [thick, darkcyan1598169, mark=square*, mark size=1.7, mark options={solid}]
table {%
2 0.14868872742866
5 0.00755088733573445
10 0.00020177443760725
20 1.79943886483282e-05
30 2.52226211082164e-06
};
\addlegendentry{VI}

\nextgroupplot[
tick pos=left,
title={(b) Noisy: $\eta = 0.5$},
x grid style={darkgrey176},
xlabel={Number of training samples},
xmajorgrids,
xmin=1.16, xmax=30.84,
xtick style={color=black},
y grid style={darkgrey176},
ylabel={$\|\hat{x} - x^\star\|_2$},
ymajorgrids,
ymin=0.0769110839131103, ymax=2.85683010091739,
ytick style={color=black},
legend cell align={left},
legend style={
  draw=black!20,
  fill=white,
  fill opacity=0.9,
  text opacity=1,
  rounded corners=1pt,
  font=\tiny,
  at={(0.03,0.97)},
  anchor=north west
}
]

\path [draw=darkolivegreen9811519, fill=darkolivegreen9811519, opacity=0.18]
(axis cs:2,1.6820584283843)
--(axis cs:2,0.460815422435691)
--(axis cs:5,0.294425734517544)
--(axis cs:10,0.204531454155378)
--(axis cs:20,0.203937991232572)
--(axis cs:30,0.203271039231487)
--(axis cs:30,0.583382852677698)
--(axis cs:30,0.583382852677698)
--(axis cs:20,0.656127757370823)
--(axis cs:10,0.829455562621282)
--(axis cs:5,1.00639149239649)
--(axis cs:2,1.6820584283843)
--cycle;

\path [draw=darkcyan1598169, fill=darkcyan1598169, opacity=0.18]
(axis cs:2,2.18324797919418)
--(axis cs:2,0.57048244083674)
--(axis cs:5,0.848013714035051)
--(axis cs:10,0.848012701661931)
--(axis cs:20,0.848013441414038)
--(axis cs:30,0.848013491523841)
--(axis cs:30,2.73047014559901)
--(axis cs:30,2.73047014559901)
--(axis cs:20,2.73046034327259)
--(axis cs:10,2.7304541611605)
--(axis cs:5,2.73046639989925)
--(axis cs:2,2.18324797919418)
--cycle;

\addplot [thick, darkolivegreen9811519, mark=*, mark size=1.7, mark options={solid}]
table {%
2 0.922978864095225
5 0.581530353802681
10 0.432687800394132
20 0.358518930139889
30 0.333379597537162
};
\addlegendentry{SUB (ours)}

\addplot [thick, darkcyan1598169, mark=square*, mark size=1.7, mark options={solid}]
table {%
2 1.29725697811698
5 1.92849003264366
10 1.93164861203122
20 1.93166944988867
30 1.9316703777819
};
\addlegendentry{VI}

\end{groupplot}

\end{tikzpicture}
\caption{Median test prediction error across different sizes of training datasets for VI and SUB.}
\label{fig:network-cournot-sample-sizes}
\end{figure*}

\subsection{Experimental Setup}

To demonstrate the proposed suboptimality-based estimator, we consider the 
well-known networked Cournot competition model on a bipartite firm-market graph, see 
\cite{abolhassani2014network,lin2017networked,bimpikis2019cournot,salehisadaghiani2019distributed}. Specifically, firms are connected to the markets in which they are allowed to sell, and each firm chooses a production quantity on its incident
edges. Let $\mathcal{M}=\{1,\dots,m\}$ denote the set of markets and
$\mathcal{F}=\{1,\dots,P\}$ the set of firms. The bipartite graph
specifies, for each firm~$i$, the subset of accessible markets
$\mathcal{M}_i \subseteq \mathcal{M}$. Firm~$i$ chooses production
quantities $x_i = (x_{ik})_{k\in\mathcal{M}_i}$ with
$0 \le x_{ik} \le \bar{q}_{ik}$, where $\bar{q}_{ik}$ is a known, exogenous capacity upper bound for link $(i,k)$. The aggregate supply in market~$k$ is $Q_k(x)=\sum_{i:\,k\in\mathcal{M}_i} x_{ik}$.

The context-dependent inverse demand in market~$k$ is
\[
p_{ik}(Q_k, s) = \underbrace{\alpha_{0,k} + \delta_k s_k}_{\alpha_k(s)}
\;-\; \beta_{i,k}\, Q_k,
\]
where $\alpha_{0,k}$ is the baseline demand intercept, $\delta_k$
controls the sensitivity to the contextual signal~$s_k$, and
$\beta_{i,k} > 0$ is the demand slope, which we allow to be firm-specific. In contrast to the homogeneous specification in \cite{abolhassani2014network}, this heterogeneity makes the game non-potential. Each firm aims to maximise profit
\[
U_i(x_i, x_{-i}, s;\,\theta)
= \sum_{k\in\mathcal{M}_i} \left[
  x_{ik}\,p_{ik}\!\big(Q_k,\, s\big)
  - c_{ik}\,x_{ik}
  - \tfrac{\gamma_i}{2}\,x_{ik}^2,\right]
\]
where $c_{ik} \ge 0$ is a link-specific linear production cost
(with $c_{ik}=0$ for inactive links) and $\gamma_i > 0$ is a
firm-level quadratic cost coefficient. Given observations of multiple contexts, corresponding equilibria and the firms' individual output capacity constraints, the goal is to find a set of parameters that can best predict the equilibrium for new, unseen contexts. The full parameter vector to be estimated is therefore
\[
\theta = \big(
  \alpha_0 \in \mathbb{R}^m,
  \delta \in \mathbb{R}^m,
  \beta \in \mathbb{R}^{P \times m}_{\ge 0},
  c \in \mathbb{R}^{P \times m}_{\ge 0},
  \gamma \in \mathbb{R}^P_{>0}
\,\big).
\]

We set $P = 20$ firms and $|\mathcal{M}| = 7$ markets with the network graph shown in Figure~\ref{fig:cournot_network}. Let $
\mathcal{E} := \{(i,k) : k \in \mathcal{M}_i\}$ denote the set of active firm market links. We independently generate 50 ground-truth games. For each game, we draw the capacity bounds
$\bar q_{ik} \sim \mathcal{U}(0.3,\,1.0)$ and keep them fixed across all contextual observations of that game. The unknown utility parameters are drawn independently according to the following:
\[
\begin{aligned}
\alpha_{0,k} &\sim \mathcal{U}(0.8,\,1.2),
& \delta_k &\sim \mathcal{U}(0.1,\,0.3),
& \gamma_i &\sim \mathcal{U}(0.5,\,1.0), \\
\beta_{ik} &\sim \mathcal{U}(0.2,\,0.4),
& c_{ik} &\sim \mathcal{U}(0.1,\,0.4),
& & \forall (i,k)\in\mathcal{E}.
\end{aligned}
\]
For inactive links, we set $c_{ik}=0$ and do not define decision variables outside $\mathcal{E}$. For each sampled game, we draw $N_{\mathrm{train}}$ training contexts and an independent test set of $N_{\mathrm{test}}=50$ contexts according to $s^j \sim_{\text{i.i.d.}} \mathcal{U}([-3,3]^m)$. The corresponding equilibria are computed using the classical projected gradient descent algorithm for convex games. We then perturb the training observations by first adding Gaussian noise according to $\tilde x^j = x^{\star,j} + \eta \frac{\|x^{\star,j}\|_2}{\sqrt{d}} \varepsilon^j$, where $d:=|\mathcal E|$ and $\varepsilon^j\sim\mathcal N(0,I_d)$, and then projecting back onto the feasible set, i.e., $\hat x^j = \Pi_{\mathcal X^j}(\tilde x^j) \in \mathcal X^j$. Since $\mathbb E[\|\varepsilon^j\|_2^2]=d$ and projection onto a closed convex set is non-expansive, it follows that $\mathbb E[\|\hat x^j - x^{\star,j}\|_2^2] \le \eta^2 \|x^{\star,j}\|_2^2$, so $\eta$ controls the noise magnitude relative to the norm of the equilibrium observation $\|x^{\star,j}\|_2$.

In our experiments we compare the VI estimator and the proposed suboptimality-based estimator in terms of out-of-sample equilibrium prediction error and parameter recovery error. Figure~\ref{fig:network-cournot-results} fixes $N_{\mathrm{train}}=20$ and varies the noise level. Panels (a) and (b) report the median prediction and parameter recovery errors across the $50$ sampled games. Panel (c) reports the estimated constants in the respective prediction-error bounds:
$\lambda_{\min}(\hat M)$ for SUB, as established in
Corollary~\ref{cor:quadratic_games}, and $\hat\mu$ for VI, using the standard strong-monotonicity bound $\hat\mu\,\ell_{\mathrm{pred}}\leq\ell_{\mathrm{VI}}$
\cite[Theorem~1]{bertsimas2015data}. Figure~\ref{fig:network-cournot-sample-sizes} instead varies $N_{\mathrm{train}}$ in the low-noise and high-noise regimes. All algorithms and experiments were implemented in Python and are available online.\footnote{Code: \url{https://github.com/afeik/inverse_games}}

\subsection{Discussion}
A few general observations are in order. First, in the near-zero-noise regime, both VI and SUB recover parameter estimates that induce near-perfect equilibrium behavior on the test instances. This suggests that, when the observations are sufficiently accurate, both estimators can identify models that reproduce the correct strategic equilibria. The difference between the two methods emerges as the noise level increases. As shown in Figure~\ref{fig:network-cournot-results}(a-b), VI performs competitively for very small perturbations, but its prediction error rises sharply beyond a critical threshold around $\eta = 5\times 10^{-2}$, whereas SUB degrades much more gracefully and remains accurate over a substantially wider range of noise levels. Panel~(c) is consistent with this behavior. Because VI fits first-order conditions evaluated at noisy profiles, it can reduce residuals from inconsistent observations by flattening the fitted pseudo-gradient, causing $\hat{\mu}$ to collapse near the same transition point; SUB instead evaluates best-response gains, which retain curvature information beyond the noisy local gradient, and its estimated $\lambda_{\min}(\hat M)$ remains positive across the tested noise levels. Consequently, the bound in Corollary~\ref{cor:quadratic_games} remains informative for SUB, whereas the standard VI bound becomes meaningless; for affine own-action utilities, however, $\lsub=\lvi$ by Corollary~\ref{cor:affine_case}, so no such difference is expected.
This is consistent with the observed out-of-sample prediction performance, and is in line with the broader prediction-oriented statistical perspective outlined in \cite{aswani2018inverse}. Finally, the above described behavior can also be seen from Figure \ref{fig:network-cournot-sample-sizes}. When no noise is present, see panel (a), the prediction error for VI and SUB is essentially identical as the dataset size increases. This is in stark contrast to the noisy case, see panel (b), where adding samples has the counter-intuitive effect of \emph{worsening} the performance of VI, while improving that of SUB as expected.

\section{Concluding Remarks}


In this work we introduced a game-theoretic suboptimality loss for inverse learning from equilibrium data. The proposed loss is behaviorally interpretable, convex under linear parametrizations, and optimizable using only player-wise best-response oracles. We established bounds linking it to both the inverse variational inequality loss and the equilibrium prediction error, with sharper characterizations for affine and quadratic games. Finally, we provided numerical evidence that the proposed approach substantially outperforms VI-based fitting under noisy observations, an advantage that persists across different training set sizes. Future work includes extensions to generalized Nash equilibrium settings and larger-scale empirical studies in traffic systems and electricity markets.

\newpage

\bibliographystyle{IEEEtran}
\bibliography{references}
\clearpage
\appendix
\section{Proof of Theorem~\ref{thm:loss_hierarchy}}
\begin{proof}[Proof of Theorem~\ref{thm:loss_hierarchy}]
Fix $\obsj$ and let $\xstar\in\Xj$ satisfy
$\F(\tht;\xstar,\sj)^\top(\x-\xstar)\geq 0 \, \forall \, \x\in\Xj$. For each player $i$, define
$g_i(x_i):=\Ui(x_i,\xhatj_{-i},\sj;\tht)$. Since $g_i$ is concave and $L$-smooth on $\Xij$, for all
$x_i,y_i\in\Xij$,
\begin{align}
g_i(y_i)-g_i(x_i)
&\le \nabla g_i(x_i)^\top(y_i-x_i), \label{eq:app-upper}\\
g_i(y_i)-g_i(x_i)
&\ge \nabla g_i(x_i)^\top(y_i-x_i)
-\tfrac{L}{2}\|y_i-x_i\|^2. \label{eq:app-lower}
\end{align}

\noindent\emph{Upper bound.}
Applying \eqref{eq:app-upper} with $x_i=\hat x_i^j$ and $y_i=x_i$ gives
\[
\Ui(x_i,\xhatj_{-i},\sj;\tht)-\Ui(\xhatj,\sj;\tht)
\le -[\F(\tht;\xhatj,\sj)]_i^\top(x_i-\hat x_i^j).
\]
Hence summing over the players gives
\begin{align*}
\lsub(\tht;\xhatj,\sj)
& \le \max_{\x\in\Xj}\F(\tht;\xhatj,\sj)^\top(\xhatj-\x)
= \lvi(\tht;\xhatj,\sj),
\end{align*}
where we used $\Xj=\X_1^j\times\cdots\times\X_P^j$.\\

\noindent\emph{Lower bound.}
Choosing $x_i=x_i^\star$ in the definition of $\lsub$ and applying \eqref{eq:app-lower} with
$x_i=\hat x_i^j$, $y_i=x_i^\star$ yields
\[
\lsub(\tht;\xhatj,\sj)
\ge \F(\tht;\xhatj,\sj)^\top(\xhatj-\xstar)
-\tfrac{L}{2}\|\xhatj-\xstar\|^2.
\]
Furthermore,
\begin{align*}
&\F(\tht;\xhatj,\sj)^\top(\xhatj-\xstar)=\\
&\big(\F(\tht;\xhatj,\sj)-\F(\tht;\xstar,\sj)\big)^\top(\xhatj-\xstar)\\
&+\F(\tht;\xstar,\sj)^\top(\xhatj-\xstar).
\end{align*}
By $\mu$-strong monotonicity,
\[
\big(\F(\tht;\xhatj,\sj)-\F(\tht;\xstar,\sj)\big)^\top(\xhatj-\xstar)
\ge \mu\|\xhatj-\xstar\|^2,
\]
Since $\xhatj\in\Xj$, the defining inequality gives
$\F(\tht;\xstar,\sj)^\top(\xhatj-\xstar)\geq 0$.
Therefore, $\F(\tht;\xhatj,\sj)^\top(\xhatj-\xstar)
\ge \mu\|\xhatj-\xstar\|^2,$ and thus combining the two bounds gives the stated inequality:
\[
\lsub(\tht;\xhatj,\sj)
\ge \Bigl(\mu-\tfrac{L}{2}\Bigr)\lpred(\tht;\xhatj,\sj).
\]

\end{proof}
\begin{proof}[Proof of Corollary~\ref{cor:affine_case}]
For each player $i$, suppose
\[
\Ui(x_i,\xhatj_{-i},\sj;\tht)
=
a_i(\tht;\xhatj_{-i},\sj)^\top x_i
+
c_i(\tht;\xhatj_{-i},\sj).
\]
Then $\nabla_{x_i}\Ui(x_i,\xhatj_{-i},\sj;\tht)=
a_i(\tht;\xhatj_{-i},\sj),$ so the $i$th block of the pseudo-gradient satisfies
\[
[\F(\tht;\xhatj,\sj)]_i
=
-a_i(\tht;\xhatj_{-i},\sj).
\]
Therefore,
\begin{align*}
\lvi(\tht;\xhatj,\sj)
&=
\max_{\x\in\Xj}
\F(\tht;\xhatj,\sj)^\top(\xhatj-\x) \\
&=
\sum_{i=1}^P
\max_{x_i\in\Xij}
-[\F(\tht;\xhatj,\sj)]_i^\top(\hat x_i^j-x_i) \\
&=
\sum_{i=1}^P
\max_{x_i\in\Xij}
a_i(\tht;\xhatj_{-i},\sj)^\top(x_i-\hat x_i^j).
\end{align*}
Using the affine form of $\Ui$, we obtain
\[
a_i(\tht;\xhatj_{-i},\sj)^\top(x_i-\hat x_i^j)
=
\Ui(x_i,\xhatj_{-i},\sj;\tht)
-
\Ui(\xhatj,\sj;\tht),
\]
because the constant term $c_i(\tht;\xhatj_{-i},\sj)$ cancels. Hence,
\begin{align*}
\lvi(\tht;\xhatj,\sj)
&=
\sum_{i=1}^P
\max_{x_i\in\Xij}
\Big[
\Ui(x_i,\xhatj_{-i},\sj;\tht)
-
\Ui(\xhatj,\sj;\tht)
\Big] \\
&=
\lsub(\tht;\xhatj,\sj).
\end{align*}
\end{proof}

\begin{proof}[Proof of Corollary~\ref{cor:quadratic_games}]
Let $\xstar$ attain the minimum in the definition of
$\lpred(\tht;\xhatj,\sj)$, and define
\[
d:=\xhatj-\xstar,
\qquad
D:=\operatorname{blkdiag}(Q_{11},\dots,Q_{PP}),
\]
and note that, by definition of the suboptimality loss,
\[
\lsub(\tht;\xhatj,\sj)
\ge
\sum_{i=1}^P
\Bigl[
\Ui(x_i^\star,\xhatj_{-i},\sj;\tht)-\Ui(\xhatj,\sj;\tht)
\Bigr].
\]
Since \(x_i^\top Q_{ii}x_i = x_i^\top (Q_{ii}+Q_{ii}^\top)x_i/2\), wlog.\ assume \(Q_{ii}=Q_{ii}^\top\succeq 0\). We use an exact second order Taylor expansion:
\[
\Ui(x_i^\star,\xhatj_{-i},\sj;\tht)-\Ui(\xhatj,\sj;\tht)
\]
\[
=\nabla_{x_i}\Ui(\xhatj,\sj;\tht)^\top(x_i^\star-\xhatj_i)
-\frac12 d_i^\top Q_{ii} d_i.
\]
Using \(\F_i=-\nabla_{x_i}\Ui\) and summing over \(i\), this yields
\[
\lsub(\tht;\xhatj,\sj)
\ge
\F(\tht;\xhatj,\sj)^\top d-\frac12 d^\top D d.
\]

Since $\xhatj\in\Xj$, the variational inequality defining
$\xstar$ gives
\[
\F(\tht;\xhatj,\sj)^\top d
\ge
\bigl(\F(\tht;\xhatj,\sj)-\F(\tht;\xstar,\sj)\bigr)^\top d.
\]
For quadratic games, the pseudo-gradient is affine in \(\x\), with Jacobian \(H\), so
\[
\F(\tht;\xhatj,\sj)-\F(\tht;\xstar,\sj)=Hd.
\]
Substituting this gives
\[
\lsub(\tht;\xhatj,\sj)
\ge
d^\top H d-\frac12 d^\top D d
\]
\[
=d^\top \frac{H+H^\top}{2} d-\frac12 d^\top D d
=d^\top M d\ge\lambda_{\min}(M)\|d\|^2.
\]
Hence,
\[
\lsub(\tht;\xhatj,\sj)\ge
\lambda_{\min}(M)\,\lpred(\tht;\xhatj,\sj).
\]

To compare this with Theorem~\ref{thm:loss_hierarchy}, note that
\[
\mu=\lambda_{\min}\!\left(\frac{H+H^\top}{2}\right),
\;
L=\max_{i=1,\dots,P}\lambda_{\max}(Q_{ii})
=
\lambda_{\max}(D).
\]
Therefore, by Weyl's inequality \cite[Thm.~4.3.1]{horn2012matrix} with $i=j=1$,
\[
\lambda_{\min}(M)
=
\lambda_{\min}\!\left(\frac{H+H^\top}{2}-\frac12 D\right)
\]
\[
\ge
\lambda_{\min}\!\left(\frac{H+H^\top}{2}\right)-\frac12\lambda_{\max}(D)
=
\mu-\frac{L}{2},
\]
so the quadratic lower bound refines the scalar bound in Theorem~\ref{thm:loss_hierarchy}.

\end{proof}




\end{document}